\documentclass[conf]{new-aiaa}
\usepackage[utf8]{inputenc}
\usepackage{xcolor}

\usepackage{graphicx}
\usepackage{cleveref}
\usepackage{booktabs}
\usepackage{multirow}
\usepackage{subfigure}
\usepackage{siunitx}
\usepackage{mathtools}
\usepackage{longtable,tabularx}
\newcommand{\X}{\mathbf{x}}
\newcommand{\U}{\mathbf{u}}
\newcommand{\R}{\mathbb{R}}
\newcommand{\Xss}{\mathcal X_{\operatorname{safe}}^*}
\newcommand{\Xs}{\mathcal X_{\operatorname{safe}}}
\newcommand{\Xa}{\mathcal X_{\operatorname{avoid}}}
\newcommand{\itt}[1]{\{0, 1, \ldots #1 \}}
\newcommand{\ittinfty}{\mathbb{W}}
\newcommand{\hdc}{d}
\newcommand{\hcbf}{h}

\newcommand{\mcbf}{M_{\operatorname{CBF}}}
\newcommand{\oura}{MPC-MCI}
\newcommand{\Rnmpc}{\mathcal R^{\operatorname{NMPC-DCBF}}}
\newcommand{\Roura}{\mathcal R^{\operatorname{\oura}}}
\newcommand{\decSpace}{\vspace{-3pt}}

\usepackage{amsthm}
\newtheorem{theorem}{Theorem}
\newtheorem{problem}{Problem}
\newtheorem{lemma}{Lemma}

\theoremstyle{definition}
\newtheorem{defn}{Definition}
\theoremstyle{remark}
\newtheorem{rmrk}{Remark}
\newtheorem{asmp}{Assumption}
\newtheorem{corollary}{Corollary}[theorem]

\Crefname{lemma}{Lemma}{Lemmas}
\Crefname{problem}{Problem}{Problem}
\Crefname{theorem}{Theorem}{Theorems}
\Crefname{defn}{Definition}{Definition}
\Crefname{rmrk}{Remark}{Remark}
\Crefname{asmp}{Assumption}{Assumptions}
\Crefname{prop}{Proposition}{Proposition}
\Crefname{corollary}{Corollary}{Corollary}
\crefname{example}{Example}{Example}
\crefname{equation}{Eq.\!}{Eqs.\!}
\crefname{figure}{Fig.\!}{Figs.\!}

\title{Maximal Controlled Invariant-MPC: Enhancing Feasibility and Reducing Conservatism through Terminal CBF Constraint in Safety-Critical Control}

\author{Tanmay Dokania\footnote{Graduate Student, Daniel Guggenheim School of Aerospace Engineering.} and Yashwanth Kumar Nakka\footnote{Assistant Professor, Director, Aerospace Robotics Lab, Daniel Guggenheim School of Aerospace Engineering.}}
\affil{Georgia Institute of Technology, Atlanta, GA, 30332}

\begin{document}

\maketitle

\begin{abstract}
Optimal control for safety-critical systems is often dependent on the conservativeness of constraints. 
Control Barrier Functions (CBFs) serve as a medium to represent such constraints, but constructing a minimally conservative CBF is a computationally intractable problem. Therefore, approaches that can guarantee safety while reducing conservatism will help improve the optimality of the system under consideration.
Here, we present a Model Predictive Control (MPC) formulation using CBF as a terminal constraint, which is proven to improve feasibility and reachable sets with increasing prediction horizon. The constructive nature of the proofs allows for warm-starting the nonlinear optimization problem, thereby reducing the computational time substantially. Simulations are set up for a simple nonholonomic system to numerically validate the results, and it is observed that the number of infeasible points decreased by a factor of $1.7$ to $2.7$. The increase in reachable state space was demonstrated by the ability of the system to track trajectories that are entirely inside the unsafe region of the control barrier function.
\end{abstract}

\section{Nomenclature}

{\renewcommand\arraystretch{1.0}
\noindent\begin{longtable*}{@{}l @{\quad=\quad} l@{}}
$\X_t$          & system state at time $t$ \\
$\U_t$          & control input at time $t$ \\
$\mathcal{X}$   & state space \\
$\mathcal{U}$   & admissible control set \\
$\Xs$           & known safe set (0-superlevel set of $\hcbf$) \\
$\Xa$           & known unsafe (avoid) set \\
$\Xss$          & maximal controlled invariant subset of $\mathcal{X} \setminus \Xa$ \\
$\mathcal{X}_f$ & terminal constraint set \\
$f$             & discrete-time system dynamics \\
$\hcbf$         & discrete-time control barrier function \\
$\hdc$          & signed distance function from obstacles \\
$\phi_f^\pi$    & flow (state trajectory) under dynamics $f$ and control policy $\pi$ \\
$\pi$           & control policy/trajectory \\
$N$             & prediction horizon \\
$\mcbf$         & number of CBF-constrained steps in NMPC-DCBF \\
$\gamma$        & CBF decay rate parameter \\
$\omega_k$      & slack variable at step $k$ \\
$J_t^*$         & optimal cost-to-go at time $t$ \\
$p(\cdot)$      & terminal cost function \\
$q(\cdot,\cdot)$& stage cost function \\
$\psi(\cdot)$   & slack variable cost function \\
$\Rnmpc_{N,M}$  & one-step reachable set for NMPC-DCBF \\
$\Roura_N$      & one-step reachable set for MPC-MCI \\
$\Delta t$      & sampling time, s \\
$a_m$           & linear acceleration bound, m/s$^2$ \\
$\alpha_m$      & angular acceleration bound, rad/s$^2$ \\
$r_0$           & obstacle radius, m \\
$v$             & linear velocity, m/s \\
$\omega$        & angular velocity, rad/s \\
$\theta$        & heading angle, rad \\
$\mathbb{W}$    & set of whole numbers \\
$\mathbb{R}$    & set of real numbers \\
$\itt{k}$       & index set $\{0, 1, \ldots, k\}$
\end{longtable*}}
\section{Introduction}
Providing formal safety guarantees for safety-critical systems is a prominent problem in robotics and control theory. Such guarantees are often given using CBFs with two properties. First, they are negative inside the unsafe region. Second, they satisfy the derivative ascent condition, which states the existence of a control input satisfying a bound on its rate of change. It is easy to construct functions that satisfy the first property. However, satisfying the second property requires deep intuition about system dynamics or computationally intensive approaches like HJB reachability analysis, which becomes intractable for higher-order systems \cite{bansal2017hamilton}. It is more often possible to construct overly conservative CBFs that restrict the system to a much smaller set of allowed states, where a heuristic can lead to the construction of a CBF. Potentially conflicting safety constraints and optimal control conditions require one to construct the least constraining safety constraints to reduce the suboptimality.

\subsection{Related Work}
CBFs have been used to encode input constraints for set invariance, with discrete-time versions proposed in \cite{agrawal2017discrete}, but without longer horizon optimization, the system reacts myopically to obstacles. Initial MPC formulations in \cite{zeng2021safety} used candidate functions lacking invariance to show how decay rates improve performance near unsafe regions, while \cite{zeng2021enhancing} introduced slack variables to enhance feasibility, though without guarantees of recursive feasibility.
Ref.~\cite{schilliger2021control} presents a tube-based MPC with a terminal Discrete Time-CBF (DT-CBF) that handles integration errors, but it assumes initial feasibility and certificate--constraint compatibility. Ref.~\cite{wabersich2022predictive} instead proposes a safety filter with a soft-constrained terminal DT-CBF to recover from infeasibility.
Ref.~\cite{katriniok2023discrete} employs quasi-CBFs and CBFs to design two-step terminal constraints and a transient function constraint that are not required to be control invariant, offering both recursive feasibility and safety guarantees. However, it does not provide formal proofs for reducing the conservatism introduced by quasi-CBFs, with the main emphasis placed on recursive feasibility. The analysis is further limited to a linear setting involving two double integrators.

\begin{figure}[hbt!]
    \centering
    \includegraphics[width=0.5\textwidth]{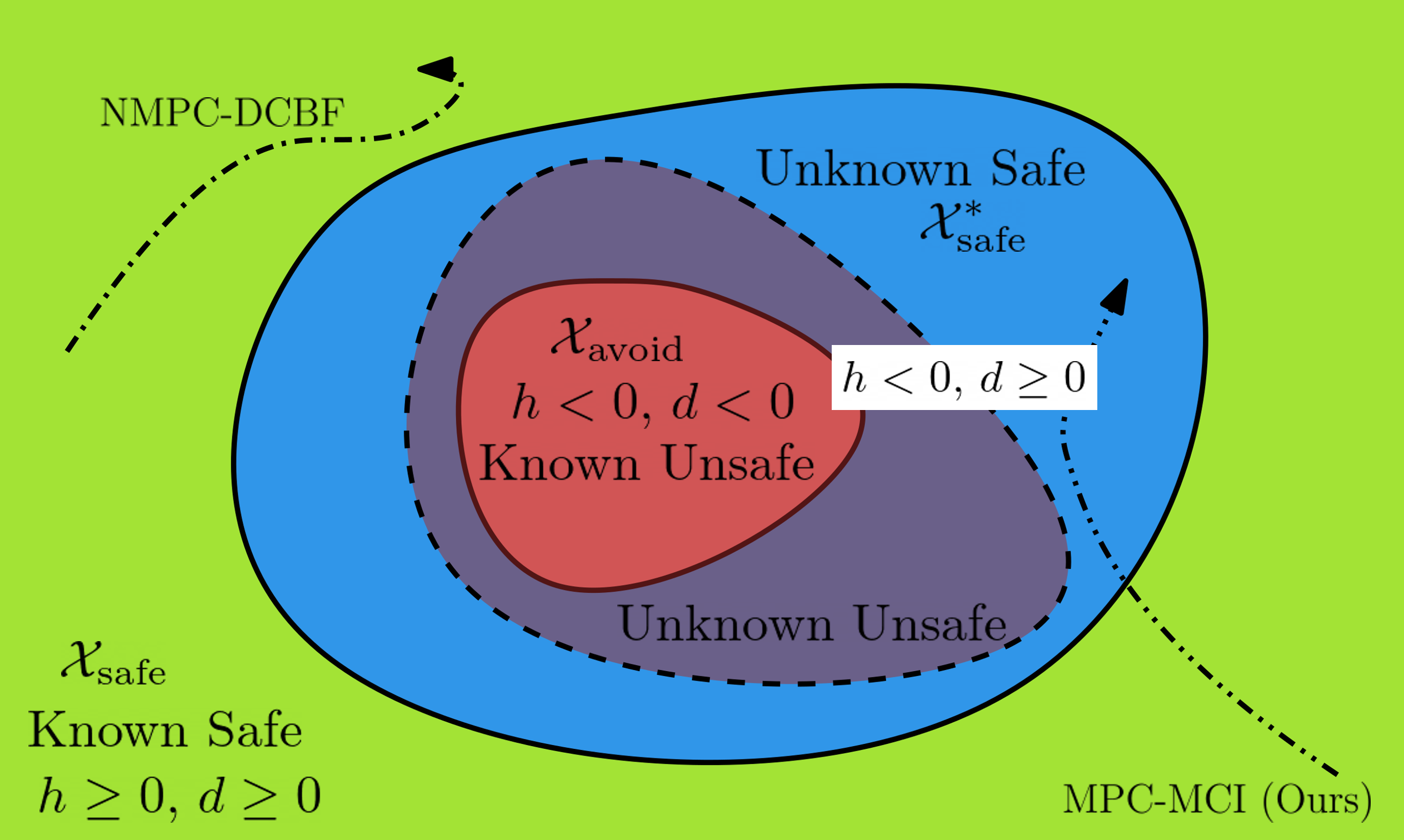}
    \caption{For a given obstacle configuration, i.e., the set of known unsafe states (denoted by $\Xa$), a conservative CBF can be constructed, which yields the safe set $\Xs$. The dashed boundary represents an unknown boundary defined by the maximal controlled invariant set $\Xss$. The earlier method (NMPC-DCBF) is overly conservative because it remains within the known safe region; however, the proposed method (MPC-MCI) exits the known safe region while staying within the unknown safe region, thereby reducing conservatism and enhancing feasibility.}
    \label{fig:cartoon_safety}
\end{figure}

\subsection{Contributions}
The main contributions of the paper are as follows:
\begin{itemize}
    \item Develops an optimization-based MPC framework that is recursively feasible and guarantees safety by keeping the system within the maximal controlled invariant safe subset; such guarantees are absent in \cite{zeng2021safety, zeng2021enhancing}
    \item Requires only initial feasibility to guarantee safety for the future states, unlike being in the conservative safe sets imposed by \cite{zeng2021safety, zeng2021enhancing}
    \item Imposes a single-step CBF terminal constraint, as opposed to two-step constraint in \cite{katriniok2023discrete}
    \item Presents constructive proof for recursive feasibility that allows warm-starting the nonlinear optimization, thereby speeding up the computation
    \item Uses reachability analysis to demonstrate that the feasible set and the reachable set expand with increasing prediction horizon
    \item Numerically validates the enhancement of the feasible set and the reachability of the proposed model predictive control for a 2D nonholonomic system through simulations
\end{itemize}

\subsection{Paper Structure}
The paper is organized as follows: Sec.~\ref{sec:background} introduces relevant background on discrete dynamics, controlled invariance, CBFs, and MPC approaches. Sec.~\ref{sec:main} presents our MPC-MCI algorithm with theoretical reachability enhancement guarantees. Sec.~\ref{sec:sim} demonstrates feasibility and tracking via simulations of a nonholonomic unicycle. Finally, Sec.~\ref{sec:conc} concludes the paper.

\section{Background}\label{sec:background}
In this section, we present the notation, preliminaries on discrete dynamics, controlled invariant sets, discrete-time CBFs, and the optimal control formulation for model-predictive control.

\subsection{Discrete-Time Nonlinear Control System}
We consider a discrete-time nonlinear control system of the form
\begin{equation}
    \X_{t+1} = f(\X_t, \U_t), \quad \X_t \in \mathcal X,\ \U_t \in \mathcal U,
    \label{eq:dis_dynamics}
\end{equation}
where $\X_t$ denotes the system state at time $t$, $\U_t$ is the control input, $\mathcal X$ is the state space, and $\mathcal U$ is the admissible control set.
For a given control trajectory $\pi: \itt{T} \to \mathcal U$ and the dynamics $f$, the flow $\phi_f^\pi: \itt{T+1} \times \mathcal X\to \mathcal X$ is defined as:
\begin{equation}
\begin{split}
    \phi_f^\pi(0, \X) &= \X,\\ \phi_f^\pi(t+1, \X) &= f(\phi_f^\pi(t, \X), \pi(t)) \ \forall t\in \itt{T}.
\end{split}
\end{equation}

\begin{asmp}
    The system dynamics $f$ in Eq.~\eqref{eq:dis_dynamics} are perfectly known, and the model is free from external disturbances or parametric uncertainties.
    \label{a:model}
\end{asmp}

\begin{asmp}
    At each time step $t$, the full state information $\X_t$ is assumed to be available for state feedback control design. \label{a:state-info}
\end{asmp}

\subsection{Controlled Invariance and Control Barrier Functions}
In safety-critical control tasks like obstacle avoidance, system trajectories must remain within a set of safe states indefinitely.

\begin{defn}[Controlled Invariant Set \cite{ames2019control}]
    A set $\mathcal P \subseteq \mathcal X$ is called \emph{control invariant} if, for every state $\X \in \mathcal P$, there exists a control input $\U \in \mathcal U$ such that $f(\X, \U) \in \mathcal P$.
    \label{def:cnt-inv}
\end{defn}

In particular, if $\mathcal P$ is controlled invariant, then for any $\X \in \mathcal P$ one can construct a control policy $\pi:\mathbb{W} \to \mathcal U$ such that the resulting trajectory satisfies $\phi_f^\pi(t,\X) \in \mathcal P$ for all $t \in \mathbb{W}$. Here, $\mathbb W$ and $\mathbb R$ denote the set of whole and real numbers, respectively.
Thus, being inside a controlled invariant set free of obstacles provides a natural notion of \emph{safety}.

\begin{defn}[Discrete-Time Control Barrier Function \cite{ames2019control}]\label{def:cbf}
    Let $\Xs$ denote the $0$-superlevel set of a function $h:\mathcal X \to \mathbb R$.
    The function $h$ is called a \emph{discrete-time control barrier function (CBF)} on $\Xs \subseteq \mathcal X$ for the dynamics in Eq.~\eqref{eq:dis_dynamics} if, for all $\X \in \Xs$, there exists a $\gamma \in (0,1]$ and a control input $\U \in \mathcal U$ such that
    \begin{equation}
        \Delta h(\X,\U) \coloneqq h(f(\X,\U)) - h(\X) \; \ge \; -\gamma h(\X).
    \end{equation}
\end{defn}

In navigation or obstacle-rich environments, unsafe states can often be described using a signed distance function.
Let $\hdc:\mathcal X \to \mathbb R$ denote such a function, and define the \emph{unsafe set} as
\begin{equation}
    \Xa \coloneqq \{\X \in \mathcal X \mid \hdc(\X) < 0\}.
\end{equation}
Here, $\Xa$ represents states that correspond to collisions or constraint violations.

Conversely, suppose there exists a discrete-time control barrier function $\hcbf:\mathcal X \to \mathbb R$ associated with $\Xa$.
Then the corresponding \emph{safe set} is given by
\begin{equation}
    \Xs \coloneqq \{\X \in \mathcal X \mid \hcbf(\X) \ge 0\},
\end{equation}
which can be shown to be a controlled invariant set.
Thus, $\Xs$ can be regarded as the \textbf{known safe} set.
In the context of navigation, $\Xs$ represents states that are either already away from obstacles, or can be actively driven away from them through an appropriate control signal.
Since both the CBF $\hcbf$ and the signed distance function $\hdc$ are constructed for the same obstacle configuration, it follows that
\[
    \Xs \subseteq \mathcal X \setminus \Xa,
\]
which under mild Lipschitz regularity of $\hcbf$ and $\hdc$ yields the compatibility condition
\begin{equation}
    \hcbf(\X) \ge 0 \;\; \implies \;\; \hdc(\X) \ge 0.
    \label{eq:CBFtoDC}
\end{equation}
The same has been depicted in Fig.~\ref{fig:cartoon_safety}.

\begin{rmrk}
    The relation in Eq.~\eqref{eq:CBFtoDC} should be interpreted as a \emph{consistency condition} between the chosen CBF and the signed distance function.
    It ensures that any state certified as safe by the CBF is also evaluated as safe according to the geometric notion of distance from obstacles.
    This condition does not introduce new restrictions but reflects the fact that both functions encode the same obstacle geometry.
\end{rmrk}

The region $\mathcal X \setminus (\Xa \cup \Xs)$ consists of states from which the system may either recover to the \textbf{known safe} set $\Xs$, or alternatively, evolve into the \textbf{known unsafe} set $\Xa$, depending on the control applied.
To characterize the subset of such states that can eventually be steered into $\Xs$, we introduce the following definition.

\begin{defn}[Maximal Controlled Invariant Subset]\label{def:mcnt-inv}
    Given a set $\mathcal P_1 \subseteq \mathcal X$, a set $\mathcal P_2 \subseteq \mathcal P_1$ is called the \emph{maximal controlled invariant subset} of $\mathcal P_1$ under the dynamics $f$ in Eq.~\eqref{eq:dis_dynamics} if, for every $\X \in \mathcal P_2$, there exists a control policy $\pi:\ittinfty \to \mathcal U$ such that
    \begin{equation}
        \phi_f^\pi(t,\X) \in \mathcal P_1, \quad \forall t \in \ittinfty.
    \end{equation}
    Equivalently,
    \begin{equation}
        \mathcal P_2
        = \big\{ \X \in \mathcal P_1 \;\big|\; \exists\ \pi:\ittinfty \to \mathcal U,\
        \phi_f^\pi(t,\X) \in \mathcal P_1,\ \forall t \in \ittinfty \big\}.
    \end{equation}
\end{defn}

Let $\Xss$ denote the maximal controlled invariant subset of $\mathcal X \setminus \Xa$.
Intuitively, $\Xss$ is the largest set of states from which the system can be kept safe (i.e., outside the unsafe set $\Xa$) for all future time through appropriate control actions.
Although computing $\Xss$ exactly is generally intractable, the relation
\[
    \Xs \subseteq \Xss
\]
follows directly from the definitions, since any state certified as safe by the CBF is necessarily a member of the maximal invariant set.

In the following sections, we develop a model predictive control (MPC) scheme that enforces invariance with respect to $\Xss$ rather than only $\Xs$.
This shift allows us to guarantee safety while simultaneously reducing the conservatism introduced by constraining the trajectories solely to the known safe set $\Xs$.

\subsection{Overview of Recent MPC-CBF Approaches}
In this section, we revisit existing optimal control formulations that integrate DT-CBFs and distance functions into model predictive control (MPC).
The main limitation of prior methods is that they either enforce safety through geometric distance constraints (MPC-DC), restrict trajectories to the conservative safe set $\Xs$ defined by CBFs (MPC-DCBF, NMPC-DCBF), or impose invariance through two-step quasi-CBFs (DTCBF-MPC).
In contrast, our proposed approach develops an MPC formulation that enforces invariance with respect to the maximal controlled invariant set $\Xss$ with a single terminal constraint.
This shift reduces conservatism while still guaranteeing safety, since $\Xs \subseteq \Xss$ by construction.

\subsubsection{Model Predictive Control.}\label{sec:mpc-vanilla}
Using the dynamics as constraints, one can formulate a finite-horizon optimization which is solved at each time step to obtain the control input. The decision variables of the optimization problem are $X, U$, which are defined as:
\begin{equation}
\begin{split}
U &:= [\U_{t|t}^\top, \U_{t+1|t}^\top, \ldots,  \U_{t+N-1|t}^\top ]^\top,\\
X &:=[\X_{t+1|t}^\top, \X_{t+2|t}^\top, \ldots, \X_{t+N|t}^\top]^\top.
\end{split}\label{eq:decision-variable-def}
\end{equation}
Here, $\U_{t+k|t}$ and $\X_{t+k|t}$ denote the predicted control input and state for step $t+k$ for the prediction made at step $t$, respectively. The optimization problem is formulated as:

\begin{problem}[\textbf{MPC}]
\decSpace
\label{prob:mpc-vanilla}
\begin{align}
    J_{t}^{*}(\X_t) =  \min_{U,X} p(\X_{t+N|t}){+}\sum_{k=0}^{N-1} & q(\X_{t+k|t},\U_{t+k|t}) \label{eq:mpc-cost}\\
    \operatorname{s.t.} \quad
    \X_{t+k+1|t} = f(\X_{t+k|t}, \U_{t+k|t}), & \ k \in\itt{N-1}  \label{eq:mpc-system-dynamics}\\
    \X_{t+k|t} \in \mathcal{X}, \U_{t+k|t} \in \mathcal{U}, & \ k \in\itt{N-1} \label{eq:mpc-state-input-constraint}\\
    \X_{t|t} = \X_t, \label{eq:mpc-current-state} \\
    \X_{t+N|t} \in \mathcal{X}_f. & \label{eq:mpc-terminal-set}
\end{align}
\end{problem}

Here, at time $t$, $J^*_t$ denotes the optimized cost-to-go and $N$ denotes the prediction horizon. The terms $p(\X_{t+N|t})$ and $q(\X_{t+k|t},\U_{t+k|t})$ in Eq.~\eqref{eq:mpc-cost} denote the terminal cost and the stage cost at step $t+k$ respectively. The constraint~\eqref{eq:mpc-system-dynamics} describes the system dynamics, and Eq.~\eqref{eq:mpc-state-input-constraint} describes the state and input constraints for each step. The constraints~\eqref{eq:mpc-current-state} and~\eqref{eq:mpc-terminal-set} impose the initial and final state conditions on the optimization problem.

The optimal solution to Problem~\ref{prob:mpc-vanilla} at time $t$ is a sequence of control inputs $U^* = [\U_{t|t}^{*\top}, \U_{t+1|t}^{*\top}, \ldots,  \U_{t+N-1|t}^{*\top} ]^\top$, whose first element is applied to the system, i.e., $\U_t = \U_{t|t}^*$, and then Problem~\ref{prob:mpc-vanilla} is reformulated for the next time step $t=t+1$, with $\X_{t+1|t}$ as the initial state, resulting in a receding horizon control strategy.

\subsubsection{NMPC-DCBF~\cite{zeng2021enhancing}.}
The decay rates $\gamma_k$ are fixed in the CBF constraint in MPC-DCBF \cite{zeng2021safety}, which are relaxed by introducing additional slack variables denoted by:
\begin{equation}
    \Omega := [\omega_0, \omega_1, \ldots, \omega_{\mcbf-1}]^\top.
\end{equation}
This method is proposed in Ref.~\cite{zeng2021enhancing}:

\begin{problem}[\textbf{NMPC-DCBF}]
\decSpace
\label{prob:cbf-nmpc}
\begin{align}
    J_{t}^{*}(\X_t){=}\min_{U, \Omega}\ & \beta V(\X_{t+N|t}){+}\sum_{k=0}^{N-1}q(\X_{t+k|t},\U_{t+k|t}){+}\psi(\omega_k) \label{subeq:cbf-nmpc-cost}\\
    \operatorname{s.t.}\quad &   \eqref{eq:mpc-system-dynamics},~\eqref{eq:mpc-state-input-constraint},~\eqref{eq:mpc-current-state},~\eqref{eq:mpc-terminal-set}, \notag\\
    &\hcbf (\X_{t+k+1|t}) \geq  \omega_k(1 -  \gamma_k) \hcbf(\X_{t+k|t}),  \notag  \\ &\omega_k \geq 0, \ k\in\itt{\mcbf{-}1}. \label{subeq:cbf-nmpc-cbf-constraint}
\end{align}
\end{problem}

\noindent
Here, the CBF constraint is set for the first $\mcbf \le N$ steps. Therefore, it leads to an increase in the number of optimization variables by $\mcbf$ and also requires additional tuning of defining a cost function $\psi$ for the slack variables. This increases the computation complexity, which, however, can be decreased by decreasing $\mcbf$.
In Ref.~\cite{zeng2021enhancing}, this approach has been demonstrated to increase the feasibility and safety as opposed to other methods like MPC-DCBF \cite{zeng2021safety} and MPC-GCBF \cite{ma2021feasibility}.

\subsubsection{DTCBF-MPC \cite{katriniok2023discrete}.}
The CBF constraint has been imposed for several states in the approaches so far; however, this approach proposes to generalize this by allowing a new constraint function $H$.

\begin{problem}[\textbf{DTCBF-MPC}]
\decSpace
\label{prob:dtcbf-mpc}
\begin{align}
    J_{t}^{*}(\X_t){=}\min_{U}\ & J(\X_t,U) \label{subeq:dtcbf-mpc-cost}\\
    \operatorname{s.t.}\quad &   \eqref{eq:mpc-system-dynamics},~\eqref{eq:mpc-state-input-constraint},~\eqref{eq:mpc-current-state},~\eqref{eq:mpc-terminal-set}, \notag\\
    & H(\X_{t+k|t})\ge 0,\ k \in \{1,2, \ldots,N-2\} \label{eq:dtcbf-H}\\
    & \hcbf_2(\X_{t+N-1|t}) \ge 0 \label{eq:dtcbf-h1}\\
    &\hcbf_2 (\X_{t+N|t}) \geq  (1 -  \gamma_k) \hcbf_2(\X_{t+N-1|t}). \label{eq:dtcbf-h2}
\end{align}
\end{problem}

\noindent
The function $\hcbf_2$ is a quasi-CBF, which is a two-step generalization of CBFs. This constraint needs to be applied on the last two time steps to ensure forward invariance, Eqs.~\eqref{eq:dtcbf-h1} and~\eqref{eq:dtcbf-h2}.
The constraint in Eq.~\eqref{eq:dtcbf-H} denotes a general constraint function that satisfies $H(\X)\ge\hcbf_2(\X)\ge0$. This leads to a relaxation of the quasi-CBF constraint imposed.
However, Ref.~\cite{katriniok2023discrete} focuses on the lane merging problem and provides safety and recursive feasibility guarantees, but there are no theoretical guarantees on reachability or feasibility enhancement. Additionally, it requires tuning the parameter $\gamma_k$, as it has been shown to significantly impact performance. The proposed approach does not require tuning parameters.

\begin{figure*}[hbt!]
\centering
\subfigure[Case 1: NMPC-DCBF]{
    \includegraphics[width=0.45\textwidth]{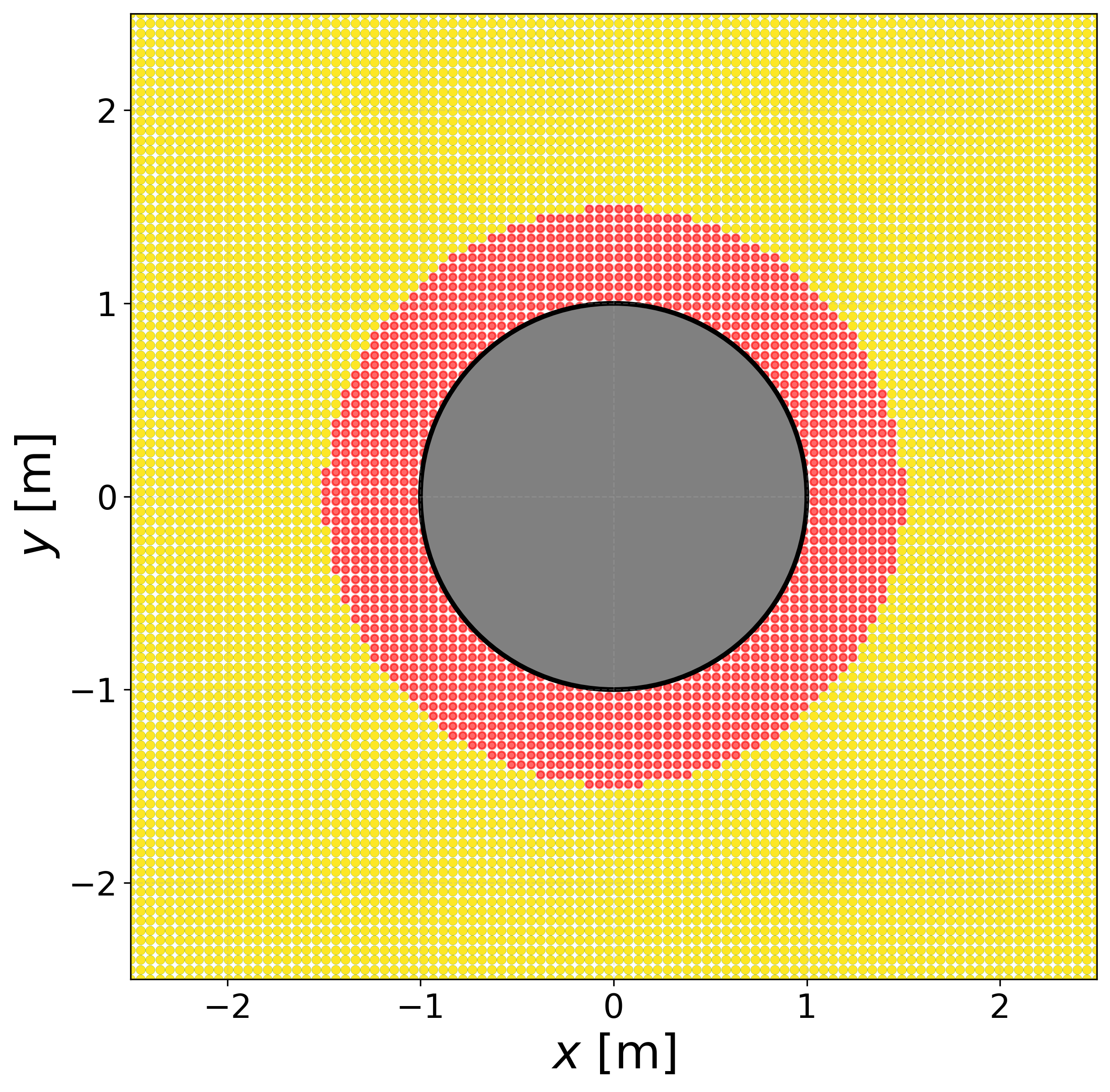}
    \label{subfig:1n}
}
\subfigure[Case 1: MPC-MCI \textbf{(Ours)}]{
    \includegraphics[width=0.45\textwidth]{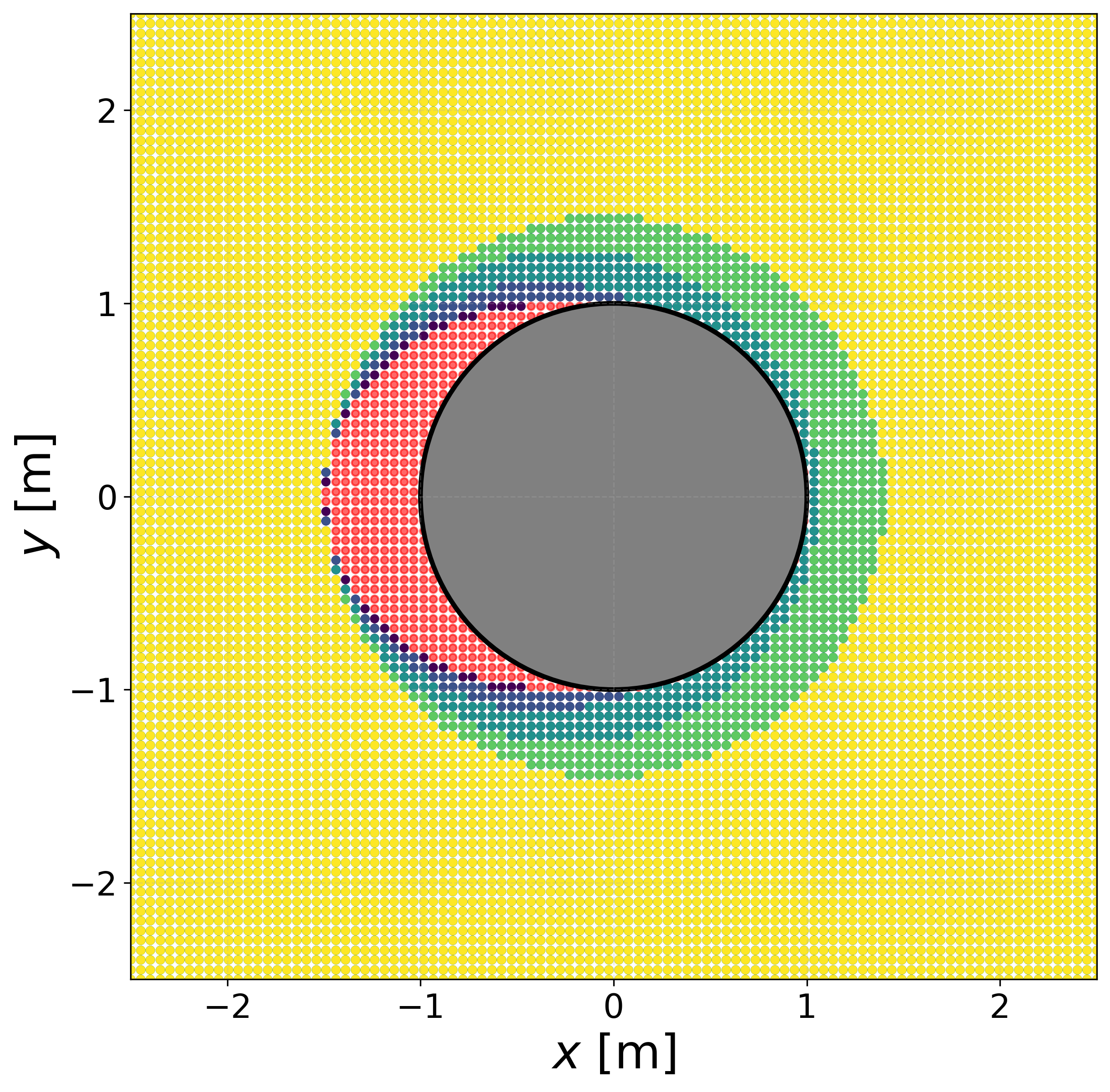}
    \label{subfig:1m}
}
\subfigure[Case 2: NMPC-DCBF]{
    \includegraphics[width=0.45\textwidth]{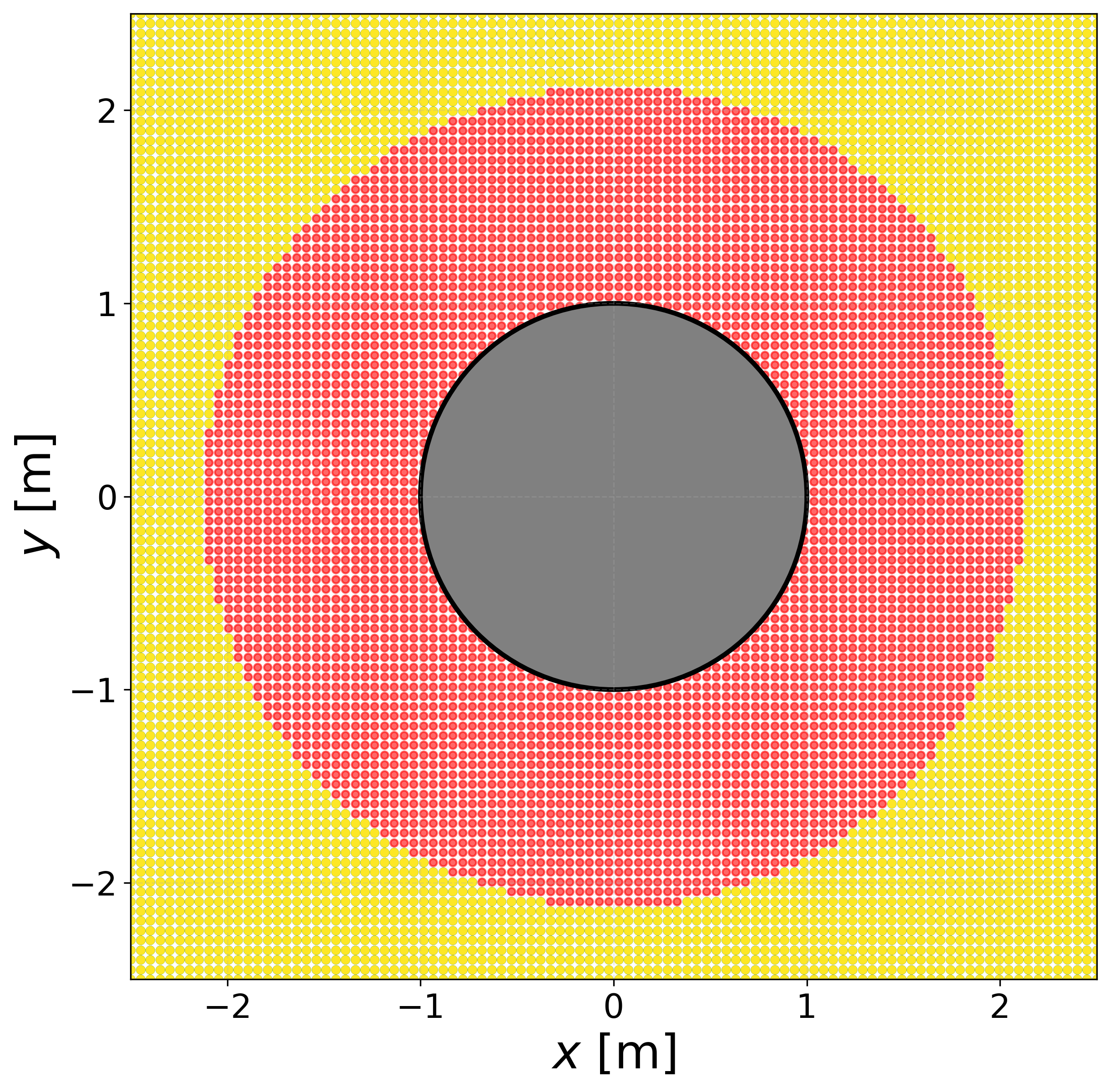}
    \label{subfig:2n}
}
\subfigure[Case 2: MPC-MCI \textbf{(Ours)}]{
    \includegraphics[width=0.45\textwidth]{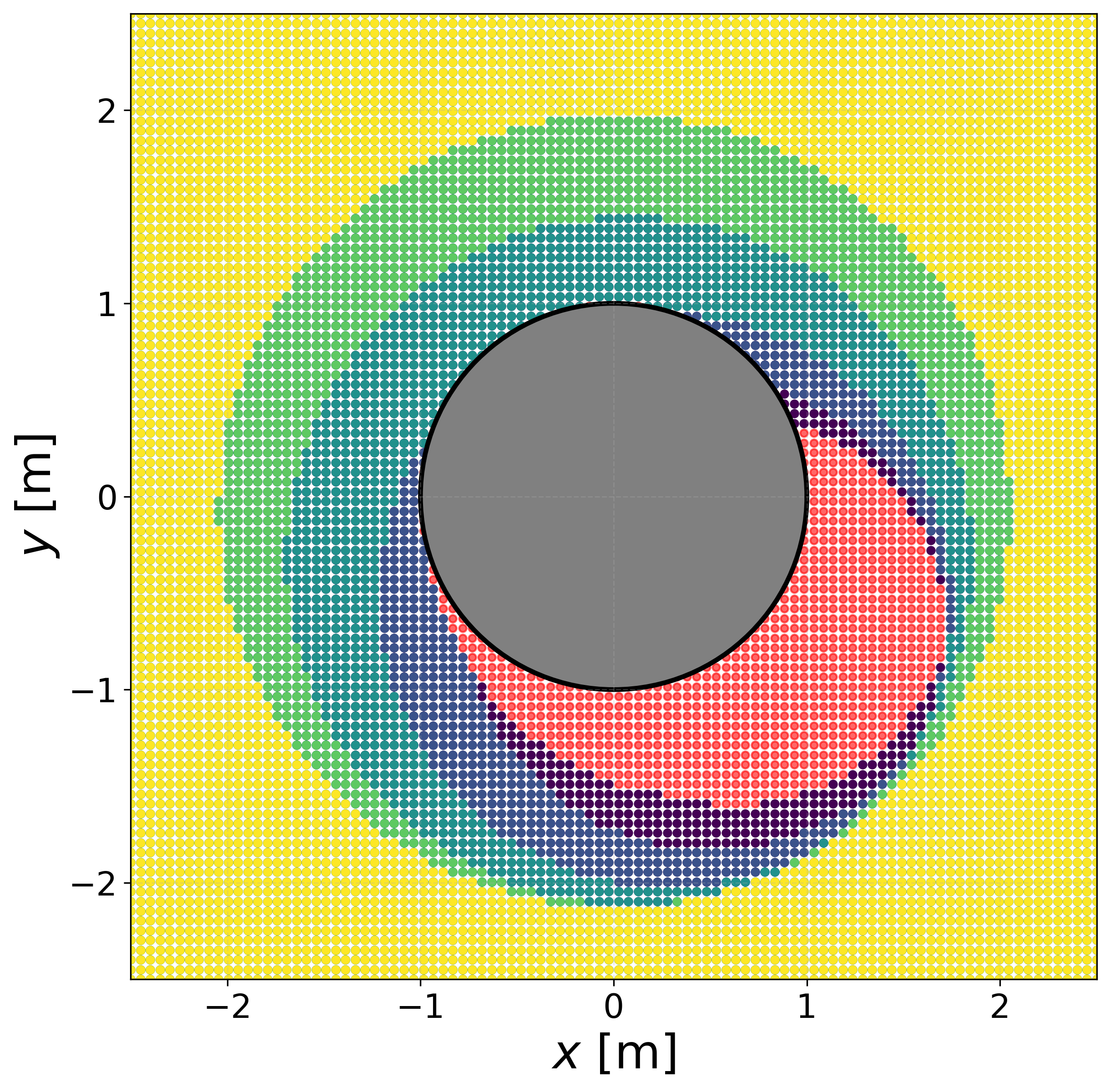}
    \label{subfig:2m}
}
\subfigure{
    \includegraphics[width=0.078\textwidth]{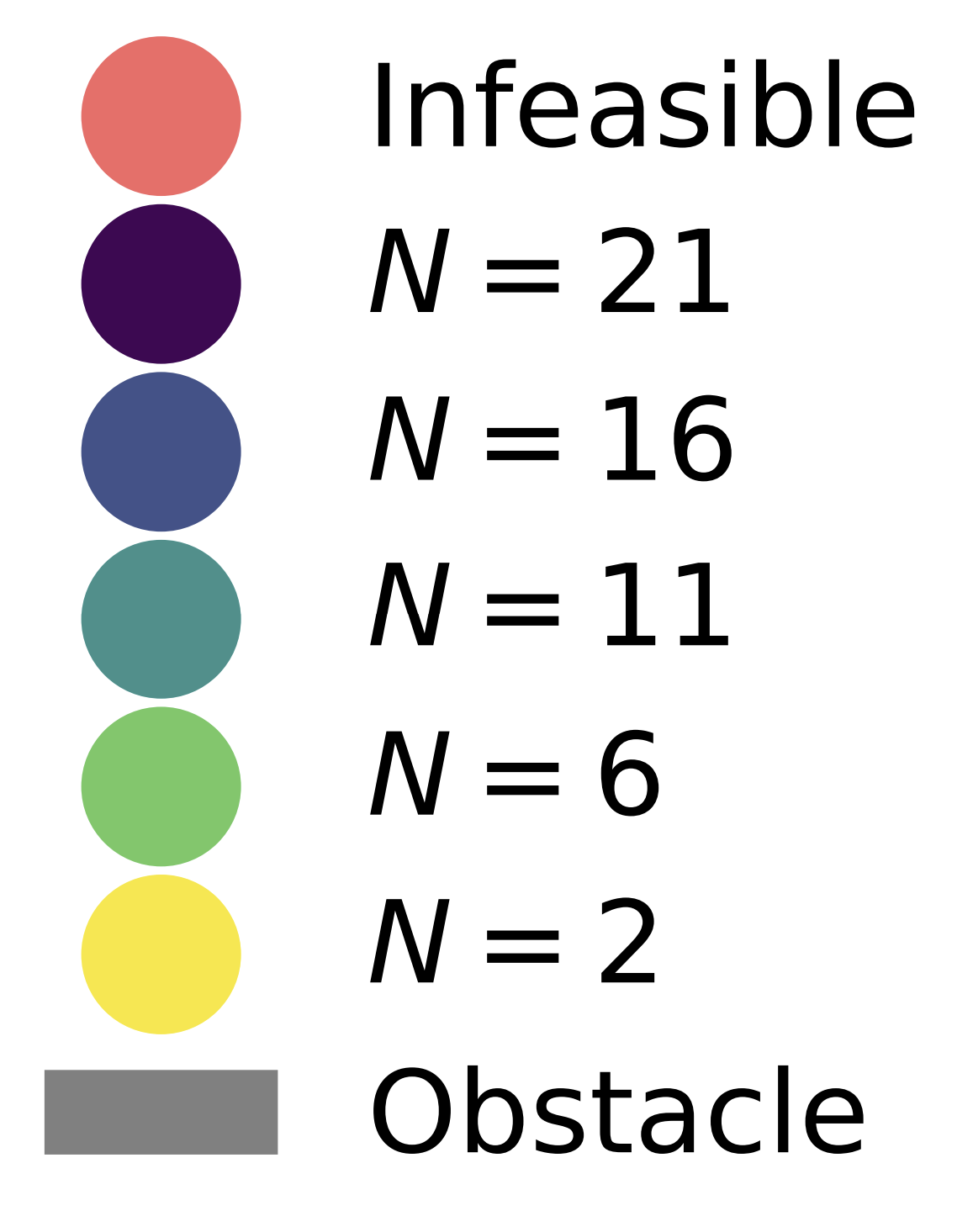}
}
\caption{New feasible region added with an increase of prediction horizon for different positions sampled for NMPC-DCBF and MPC-MCI for different prediction horizons $N=2,6,11,16,21$. In Case 1: $v=\qty{1}{\meter\per\second},\, \theta = \qty{0}{\radian},\, \omega = \qty{0}{\radian\per\second}$ and in Case 2: $v=\qty{1.5}{\meter\per\second},\, \theta = \qty{1.57}{\radian},\, \omega = \qty{2}{\radian\per\second}$. Increasing the prediction horizon adds a new feasible region of MPC-MCI, which is not the case with NMPC-DCBF.}
\label{fig:feas_comparison}
\end{figure*}

\section{Main Result: MPC Formulation Using Distance and Barrier Functions}\label{sec:main}

We propose an MPC formulation that combines a signed distance function $\hdc$ and a discrete-time control barrier function (CBF) $\hcbf$.
The distance constraint enforces that all transient (predicted) states remain outside the unsafe set $\Xa$, while a terminal CBF constraint ensures the terminal state lies in the safe, controlled-invariant set $\Xs$.
This structure yields a tractable controller with strong safety guarantees and reduced conservatism.

\begin{problem}[\textbf{MPC-MCI}]
\decSpace
\label{prob:MPC-CBF-DC}
\begin{align}
    J_{t}^{*}(\X_t){=}\min_{U, X}\ & p(\X_{t+N|t}){+}\sum_{k=0}^{N-1}q(\X_{t+k|t},\U_{t+k|t}) \label{subeq:star-cost}\\
    \operatorname{s.t.}\quad &   \eqref{eq:mpc-system-dynamics},~\eqref{eq:mpc-state-input-constraint},~\eqref{eq:mpc-current-state},~\eqref{eq:mpc-terminal-set}, \notag\\
     &\hdc(\X_{t+k|t}) \geq 0,\ k \in \itt{N-1}  \label{subeq:MPC-CBF-DC-dc}\\
     &\hcbf(\X_{t+N|t}) \geq 0. \label{subeq:MPC-CBF-DC-cbf}
\end{align}
\end{problem}

\noindent
In the above formulation, the constraint~\eqref{subeq:MPC-CBF-DC-dc} forbids predicted collisions along the horizon by requiring non-negative signed distance at each step.
Constraint~\eqref{subeq:MPC-CBF-DC-cbf} anchors the terminal state in $\Xs$, a controlled-invariant safe set (the $0$-superlevel of $\hcbf$), thereby enabling an infinite-horizon safety argument via invariance.
Under Assumption~\ref{a:model} and the compatibility condition
$\hcbf(\X)\ge 0  \Longrightarrow  \hdc(\X)\ge 0$
which holds when $\hcbf$ and $\hdc$ encode the same obstacle geometry (see discussion around Eq.~\eqref{eq:CBFtoDC}), we prove the feasibility and safety in the following two lemmas.

\begin{lemma}[Initial and Recursive Feasibility]\label{thm:feas}
Consider Eq.~\eqref{eq:dis_dynamics} under Assumptions~\ref{a:model} and~\ref{a:state-info}, with $\hdc$ and $\hcbf$ constructed for the same $\Xs$ configuration and satisfying Eq.~\eqref{eq:CBFtoDC}. If $\X_0 \in \Xs$, then Problem~\ref{prob:MPC-CBF-DC} is feasible at $t=0$ and remains feasible for all $t \ge 0$.
\end{lemma}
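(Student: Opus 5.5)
The plan is to prove both claims constructively. For initial feasibility we build a feasible trajectory from the controlled invariance of $\Xs$. For recursive feasibility we use the standard MPC shift-and-append argument, closing the tail with one CBF step.

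\textbf{Initial feasibility.} Since $\hcbf$ is a discrete-time CBF on $\Xs$ (Definition~\ref{def:cbf}), for every $\X\in\Xs$ there exist $\gamma\in(0,1]$ and $\U\in\mathcal U$ with
\[
\hcbf(f(\X,\U))\ge(1-\gamma)\hcbf(\X)\ge 0 .
\]
Hence $\Xs$ is controlled invariant in the sense of Definition~\ref{def:cnt-inv}. Starting from $\X_0\in\Xs$, we apply this selection inductively for $k=0,\dots,N-1$. This gives inputs $\U_{k|0}\in\mathcal U$ and states $\X_{k|0}=\phi_f^\pi(k,\X_0)\in\Xs$ for all $k\in\itt{N}$.

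We then check each constraint of Problem~\ref{prob:MPC-CBF-DC}:
\begin{itemize}
\item The dynamics and initial-condition constraints hold by construction.
\item The distance constraint~\eqref{subeq:MPC-CBF-DC-dc} holds because $\hcbf(\X_{k|0})\ge0$ implies $\hdc(\X_{k|0})\ge0$ by Eq.~\eqref{eq:CBFtoDC}.
\item The terminal CBF constraint~\eqref{subeq:MPC-CBF-DC-cbf} holds since $\X_{N|0}\in\Xs$.
\end{itemize}
Two constraints need an assumption. For the state constraint $\X_{k|0}\in\mathcal X$, we take $\Xs\subseteq\mathcal X$. For the terminal set constraint~\eqref{eq:mpc-terminal-set}, we take $\mathcal X_f\supseteq\Xs$; the natural reading is $\mathcal X_f=\Xs$ or $\mathcal X_f=\mathcal X$. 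We will state this explicitly.

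\textbf{Recursive feasibility.} Suppose Problem~\ref{prob:MPC-CBF-DC} is feasible at time $t$ with optimal (or any feasible) solution $(\U^*_{t+k|t},\X^*_{t+k|t})$. By Assumptions~\ref{a:model} and~\ref{a:state-info}, applying $\U_t=\U^*_{t|t}$ yields exactly $\X_{t+1}=\X^*_{t+1|t}$. At time $t+1$ we propose the shifted candidate:
\begin{itemize}
\item inputs $\U_{t+1+k|t+1}=\U^*_{t+1+k|t}$ for $k\in\itt{N-2}$, with states $\X_{t+1+k|t+1}=\X^*_{t+1+k|t}$ for $k\in\itt{N-1}$;
\item a final input $\bar\U\in\mathcal U$ chosen by the CBF condition at $\X^*_{t+N|t}$, which is admissible because $\X^*_{t+N|t}\in\Xs$ by~\eqref{subeq:MPC-CBF-DC-cbf};
\item the new terminal state $\X_{t+N+1|t+1}=f(\X^*_{t+N|t},\bar\U)$.
\end{itemize}

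We then verify the constraints at time $t+1$. The transient distance constraints for $k\in\itt{N-2}$ are inherited from time $t$. The new transient index $k=N-1$ corresponds to the old terminal state $\X^*_{t+N|t}$. This state satisfies $\hcbf\ge0$, hence $\hdc\ge0$ by Eq.~\eqref{eq:CBFtoDC}, and it lies in $\mathcal X$ under the same inclusion $\Xs\subseteq\mathcal X$. The new terminal state satisfies
\[
\hcbf(\X_{t+N+1|t+1})\ge(1-\gamma)\hcbf(\X^*_{t+N|t})\ge0 ,
\]
so~\eqref{subeq:MPC-CBF-DC-cbf} and the terminal set constraint hold. Induction on $t$ then gives feasibility for all $t\ge0$.

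\textbf{Main obstacle.} The argument is routine apart from two points of bookkeeping. First, the old terminal state must satisfy the transient constraints, which is exactly where compatibility~\eqref{eq:CBFtoDC} is used. Second, the relation between $\mathcal X_f$, $\mathcal X$ and $\Xs$ is left implicit in the paper. I would resolve the second point by assuming $\Xs\subseteq\mathcal X_f\cap\mathcal X$, noting it as a standing convention. I would also remark that the construction supplies an explicit warm start for the solver at each step.
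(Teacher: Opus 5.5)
Your proposal is correct and follows the paper's proof essentially step for step. Initial feasibility comes from the controlled invariance of $\Xs$ combined with the compatibility condition~\eqref{eq:CBFtoDC}, and recursive feasibility comes from shifting the previous solution and appending one CBF-admissible input at the old terminal state. Your bookkeeping is tighter than the paper's in three places: you propagate membership in $\Xs$ up to index $N$, you explicitly invoke~\eqref{eq:CBFtoDC} for the old terminal state at its new transient index, and you flag the implicit requirement $\Xs \subseteq \mathcal X_f$. One small correction: $\Xs \subseteq \mathcal X$ holds by the definition of $\Xs$, so you do not need to assume it.
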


\begin{proof}
\textbf{Case 1 (Initial feasibility at $t=0$).}
Given $\X_0 \in \Xs$, which is a control invariant set, following from Definition~\ref{def:cnt-inv}, there exists a control trajectory
$\pi:\itt{N-1} \to\mathcal U$
such that
$\phi_f^\pi(t,\X_0) \in \Xs, \quad \forall t \in \itt{N-1}.$
Following Eq.~\eqref{eq:CBFtoDC} and the fact that $\Xs$ is the $0$-superlevel set of $\hcbf$, the control sequence $U = \pi$ satisfies the constraints~\eqref{subeq:MPC-CBF-DC-dc} and~\eqref{subeq:MPC-CBF-DC-cbf}.
Hence, Problem~\ref{prob:MPC-CBF-DC} is feasible at $t=0$.

\textbf{Case 2 (Recursive feasibility: $t \mapsto t+1$).}
Suppose the optimal control trajectory computed at time step $t^*$ is
$\pi^* = \U_{t^*:t^*+N|t^*},$
then the terminal state
$\X_f := \phi_f^{\pi^*}(N, \X_{t^*})$
satisfies $\X_f \in \Xs$, since $\pi^*$ enforces the terminal CBF constraint~\eqref{subeq:MPC-CBF-DC-cbf}.
As $\Xs$ is controlled invariant, there exists $\hat\U \in \mathcal U$ such that
$f(\X_f,\hat\U) \in \Xs.$
Now, define the shifted control trajectory
\begin{equation}
    \hat\pi(t) :=
    \begin{cases}
        \pi^*(t+1), & t \in \itt{N-2}, \\
        \hat\U, & t = N-1.
    \end{cases}
\end{equation}
By construction, $\hat\pi$ satisfies the constraints~\eqref{subeq:MPC-CBF-DC-dc} and~\eqref{subeq:MPC-CBF-DC-cbf} at time $t^*+1$.
Therefore, if Problem~\ref{prob:MPC-CBF-DC} is feasible at $t=t^*$, it is also feasible at $t=t^*+1$.

Combining Case 1 and Case 2, and following the principle of mathematical induction, we conclude that Problem~\ref{prob:MPC-CBF-DC} is feasible for all $t \ge 0$.
\end{proof}

\begin{corollary}
    If Problem~\ref{prob:MPC-CBF-DC} is feasible at $t=0$, then it is feasible for all $t>0$.
\end{corollary}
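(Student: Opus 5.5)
The corollary follows by reusing only the recursive-feasibility step (Case 2) of Lemma~\ref{thm:feas}; the assumption $\X_0 \in \Xs$ is not needed. That assumption was used only in Case 1 to produce a feasible solution at $t=0$. Here feasibility at $t=0$ is the hypothesis, so we skip Case 1 and run the induction $t \mapsto t+1$ from the given base case.

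\textbf{Inductive step.} Suppose Problem~\ref{prob:MPC-CBF-DC} is feasible at time $t$, and let $\pi^*$ be the applied (optimal, or any feasible) control trajectory with predicted states $\X_{t+k|t}$. Under Assumption~\ref{a:model}, the true next state is exactly $\X_{t+1} = \X_{t+1|t}$. Assumption~\ref{a:state-info} guarantees this state is available to initialize the problem at $t+1$.

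Form the shifted candidate $\hat\pi$ as in Case 2: use $\pi^*(k+1)$ for $k \in \itt{N-2}$ and append $\hat\U$. The appended input exists by controlled invariance of $\Xs$, applied at the terminal state $\X_{t+N|t}$. That state lies in $\Xs$ because of constraint~\eqref{subeq:MPC-CBF-DC-cbf}, i.e.\ $\hcbf(\X_{t+N|t}) \ge 0$.

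We then check each constraint of Problem~\ref{prob:MPC-CBF-DC} at time $t+1$.
\begin{itemize}
\item \emph{Dynamics and initial condition.} These hold by construction of the flow.
\item \emph{Distance constraint, early steps.} For $k \in \itt{N-2}$, the predicted states $\X_{t+1+k|t+1} = \X_{t+1+k|t}$ satisfy $\hdc \ge 0$, since they were constrained at time $t$.
\item \emph{Distance constraint, step $k=N-1$.} The new state is $\X_{t+N|t}$. It satisfies $\hcbf \ge 0$, so Eq.~\eqref{eq:CBFtoDC} gives $\hdc \ge 0$.
\item \emph{Terminal CBF constraint.} The new terminal state is $f(\X_{t+N|t}, \hat\U) \in \Xs$, so~\eqref{subeq:MPC-CBF-DC-cbf} holds.
\item \emph{State, input and terminal-set constraints.} Constraints~\eqref{eq:mpc-state-input-constraint} and~\eqref{eq:mpc-terminal-set} need $\hat\U \in \mathcal U$ and the new terminal state in $\mathcal X_f$.
\end{itemize}

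\textbf{Main obstacle.} The only nontrivial point is the last item. $\hat\U \in \mathcal U$ holds by Definition~\ref{def:cnt-inv}, and the new state lies in $\mathcal X$ because $\Xs \subseteq \mathcal X$. For $\mathcal X_f$, I would adopt the same implicit convention as Lemma~\ref{thm:feas}: either $\mathcal X_f \supseteq \Xs$, or $\mathcal X_f$ is controlled invariant compatibly with $\Xs$, so that one choice of $\hat\U$ keeps the terminal state in both sets. I would state this explicitly.

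With the base case given and the step established, induction yields feasibility for all $t > 0$.
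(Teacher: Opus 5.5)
Your proposal is correct and matches the paper's implicit argument. The paper gives the corollary no separate proof: it is Case~2 of the proof of Lemma~\ref{thm:feas} (shift the feasible input sequence and append a $\hat\U$ obtained from controlled invariance of $\Xs$, using Eq.~\eqref{eq:CBFtoDC} for the old terminal state), inducted from the hypothesized base case, and Case~2 never uses $\X_0 \in \Xs$. Your explicit handling of the terminal-set constraint $\X_{t+N|t}\in\mathcal X_f$, which the paper silently assumes is compatible with $\Xs$, tightens the argument rather than departing from it.
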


\begin{lemma}[Safety Guarantee]
Under the conditions of Lemma~\ref{thm:feas}, the closed-loop trajectory satisfies
\[
\X_t \in \Xss,\quad \forall\ t \in \ittinfty,
\]
where $\Xss$ is the maximal controlled invariant subset of $\mathcal X \setminus \Xa$.
\end{lemma}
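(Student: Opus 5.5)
To prove the statement, I would fix an arbitrary closed-loop time $t \in \ittinfty$ and exhibit an infinite control policy $\pi:\ittinfty\to\mathcal U$ whose flow from $\X_t$ stays in $\mathcal X\setminus\Xa$ for all time. By Definition~\ref{def:mcnt-inv}, that places $\X_t$ in $\Xss$. The policy has two pieces: the finite optimal sequence from Problem~\ref{prob:MPC-CBF-DC}, followed by an infinite tail that keeps the state inside the controlled invariant set $\Xs$.

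\textbf{Step 1 (feasibility at every time).} By Lemma~\ref{thm:feas}, since $\X_0\in\Xs$, Problem~\ref{prob:MPC-CBF-DC} is feasible at every $t\ge0$. So at each $t$ there is an optimal input sequence $\pi^*_t=(\U^*_{t|t},\ldots,\U^*_{t+N-1|t})$ with predicted states $\X^*_{t+k|t}$. These states satisfy $\hdc(\X^*_{t+k|t})\ge0$ for $k\in\itt{N-1}$ and $\hcbf(\X^*_{t+N|t})\ge0$. Under Assumptions~\ref{a:model} and~\ref{a:state-info}, the model is exact and $\X_{t|t}=\X_t$, so the predicted flow is the true flow: $\X^*_{t+k|t}=\phi_f^{\pi^*_t}(k,\X_t)$. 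In particular $\hdc(\X_t)\ge0$, so $\X_t\notin\Xa$ and hence $\X_t\in\mathcal X\setminus\Xa$.

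\textbf{Step 2 (infinite extension).} Set $\X_f:=\X^*_{t+N|t}\in\Xs$. Because $\Xs$ is controlled invariant (Definition~\ref{def:cnt-inv}), I would build inputs inductively, exactly as in the remark following that definition: given $\mathbf y_j\in\Xs$, choose $\hat\U_j$ with $\mathbf y_{j+1}=f(\mathbf y_j,\hat\U_j)\in\Xs$, starting from $\mathbf y_0=\X_f$. Concatenating gives $\pi(k)=\U^*_{t+k|t}$ for $k\le N-1$ and $\pi(N+j)=\hat\U_j$ for $j\ge0$. Its flow from $\X_t$ satisfies $\hdc\ge0$ for $k\le N-1$ by Step 1. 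For $k\ge N$ the flow lies in $\Xs$, so Eq.~\eqref{eq:CBFtoDC} gives $\hdc\ge0$ there too. Thus $\phi_f^\pi(k,\X_t)\in\mathcal X\setminus\Xa$ for all $k\in\ittinfty$, and therefore $\X_t\in\Xss$.

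\textbf{Main obstacle.} The only delicate point is identifying the closed-loop state with the predicted state. The next state $\X_{t+1}$ must equal $\X^*_{t+1|t}$, so that the constraints imposed in the optimization transfer to the actual trajectory. This is precisely what Assumption~\ref{a:model} supplies, and I would state it explicitly. The membership condition $\X_{t+k|t}\in\mathcal X$ comes from constraint~\eqref{eq:mpc-state-input-constraint}, and $\Xs\subseteq\mathcal X$ covers the tail. The selection of the infinite tail needs only a countable sequence of choices, so no further argument is required.
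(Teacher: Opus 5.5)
Your proposal is correct and follows the same route as the paper. You apply Lemma~\ref{thm:feas} at time $t$, concatenate the optimal $N$-step input sequence with an infinite tail kept in $\Xs$ by controlled invariance, and use the distance constraints~\eqref{subeq:MPC-CBF-DC-dc} for the first $N$ states and Eq.~\eqref{eq:CBFtoDC} for the tail. Your indexing $\X_f=\X^*_{t+N|t}$ is in fact the right one; the paper writes $\phi_f^\pi(N-1,\X_t)$, an off-by-one slip. The predicted-versus-actual identification you flag as delicate is really used in the recursive step of Lemma~\ref{thm:feas}: at a fixed $t$, the predicted states already equal $\phi_f^{\pi}(k,\X_t)$ by the dynamics constraints alone.
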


\begin{proof}
    As Lemma~\ref{thm:feas} holds, let $\pi:\itt{N-1} \to \mathcal U$ denote the control trajectory computed at time step $t$ for the state $\X_t$. Then, as Eq.~\eqref{subeq:MPC-CBF-DC-cbf} is satisfied, therefore $\X_f := \phi^\pi_f(N-1,\X_t) \in \Xs$. Following Definition~\ref{def:cnt-inv}, one can construct a control trajectory $\pi^*: \ittinfty \to\mathcal U$ such that:
    \begin{equation}
        \phi_f^{\pi^*}(t, \X_f) \in \Xs, \quad \forall\ t \in \ittinfty.
    \end{equation}
    Therefore, consider the control trajectory given by:
    \begin{equation}
        \hat \pi(t) := \begin{cases}
            \pi(t), & t \in \itt{N-1}  \\
            \pi^*(t-N), & t\in \{N, N+1, \ldots, \infty \}
        \end{cases}
    \end{equation}
    Under $\hat\pi$, using Eq.~\eqref{eq:CBFtoDC} and the constraint~\eqref{subeq:MPC-CBF-DC-dc} holds for $k\in\itt{N-1}$, the following holds:
    \begin{equation}
        \phi_f^{\hat\pi}(t, \X_t) \in \mathcal X \setminus \Xa, \quad \forall\ t \in \ittinfty
    \end{equation}
    Hence, following the definition of maximal controlled invariant set, $\X_t \in \Xss$.
\end{proof}

\subsection{Theoretical Comparison of Reachability and Feasibility}
In this section, we present the enhancement of feasibility using reachability analysis of the proposed MPC-MCI method compared to NMPC-DCBF, as shown in Ref.~\cite{zeng2021enhancing}, which improves feasibility over MPC-DCBF.
We define one-step reachable states for NMPC-DCBF as follows:
\begin{equation}\label{eq:reachnmpc}
\begin{split}
        \Rnmpc_{N, M}(\X_0) := \{ \X_{1} \in \mathcal X\ |\ \forall i \in \itt {N-1}, \\
        \exists \U_{i} \in \mathcal{U}, \X_i \in \mathcal X, \\
        \X_{i+1} = f(\X_{i}, \U_{i}), \\
        \hcbf (\X_{j+1}) \ge 0, \forall j \in \itt{M-1}\}
\end{split}
\end{equation}

Here, $N\ge1$ is the prediction horizon and $M\ge1$ is the number of steps with the CBF constraint.
This set represents all possible states the system can be in at the next step if it is currently at $\X_0$, as it obtained with the slack variables being $\omega_k = 0$ in constraint~\eqref{subeq:cbf-nmpc-cbf-constraint}.
Similarly, we define the set of one-step reachable states for MPC-MCI:
\begin{equation}\label{eq:reachour}
\begin{split}
        \Roura_N(\X_0) := \{ \X_{1} \in \mathcal X\ |\ \forall i \in \itt {N-1},\\
        \exists \U_{i} \in \mathcal{U}, \X_i \in \mathcal X,\\
        \X_{i+1} = f(\X_{i}, \U_{i}), \hdc(\X_i) \ge 0,\\
    \hcbf (\X_N)\ge0\}
\end{split}
\end{equation}
The key difference in the reachable set definitions is that for NMPC-DCBF, the CBF constraint is imposed for the first $M$ steps; however, for MPC-MCI, the distance constraint is imposed for the first $N-1$ steps, and the CBF constraint is imposed for the terminal state.

\begin{theorem}[Reachability Enhancement]\label{thm:feas_enh}
    With the definitions in Eqs.~\eqref{eq:reachnmpc} and~\eqref{eq:reachour}, we have
    \begin{subequations}
    \begin{align}
        \Rnmpc_{N, M} &= \Rnmpc_{1,1} \label{subeq:FE:NM-11}\\
        \Rnmpc_{1,1} &= \Roura_{1} \label{subeq:FE:11-1}\\
        \Roura_N &\subseteq \Roura_{N+1} \label{subeq:FE:N-N+}
    \end{align}
    \end{subequations}
\end{theorem}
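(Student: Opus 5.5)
We prove the three relations separately, working directly from the set definitions in Eqs.~\eqref{eq:reachnmpc} and~\eqref{eq:reachour}. The only structural facts needed are that $\Xs$ is controlled invariant (Definition~\ref{def:cnt-inv}) and the compatibility condition Eq.~\eqref{eq:CBFtoDC}. Throughout, a point $\X_1$ belongs to a reachable set iff it admits a witness: a finite input/state sequence starting at $\X_0$, passing through $\X_1$ at step one, and satisfying the listed constraints. Each inclusion will therefore be proved by turning a witness for one set into a witness for the other.

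\textbf{Relation \eqref{subeq:FE:NM-11}.} We show both inclusions.

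For $\Rnmpc_{N,M}\subseteq\Rnmpc_{1,1}$, take a witness for $(N,M)$ and truncate it after the first step. Since $M\ge 1$, the constraint $\hcbf(\X_1)\ge0$ is among those imposed, so the truncated sequence is a witness for $(1,1)$.

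For the reverse inclusion, take $\X_1$ with $\hcbf(\X_1)\ge 0$ and $\X_1=f(\X_0,\U_0)$, i.e.\ $\X_1\in\Xs$. Controlled invariance of $\Xs$ lets us pick inputs $\U_1,\dots,\U_{N-1}$ inductively so that every subsequent state stays in $\Xs$. Then $\hcbf(\X_{j+1})\ge0$ holds for all $j$, in particular for $j\in\itt{M-1}$. (We read $M\le N$, or simply extend the sequence as far as needed.) This gives a witness for $(N,M)$.

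\textbf{Relation \eqref{subeq:FE:11-1}.} With $N=1$, the index set in Eq.~\eqref{eq:reachour} is $i\in\{0\}$. The constraints become $\X_1=f(\X_0,\U_0)$, $\hdc(\X_0)\ge0$ and $\hcbf(\X_1)\ge0$. The first and third coincide exactly with those of $\Rnmpc_{1,1}$. The only extra condition is $\hdc(\X_0)\ge 0$.

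This is the delicate point. The condition concerns the current state, which is fixed and is implicitly assumed safe/feasible in both frameworks; for instance, by the safety lemma and Eq.~\eqref{eq:CBFtoDC}, $\X_0$ lies outside $\Xa$ in the closed loop. We make this standing assumption explicit, namely that $\X_0\notin\Xa$. Under it the extra condition is vacuous and the two sets coincide. Without it, both sets are compared only for admissible $\X_0$. I expect this bookkeeping about $\X_0$ to be the main thing to phrase carefully; the mathematics is trivial.

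\textbf{Relation \eqref{subeq:FE:N-N+}.} This is the substantive part, and it mirrors the shifted-trajectory construction in the recursive feasibility proof of Lemma~\ref{thm:feas}.

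Take $\X_1\in\Roura_N(\X_0)$ with witness $\U_0,\dots,\U_{N-1}$ and states $\X_0,\dots,\X_N$. These satisfy $\hdc(\X_i)\ge0$ for $i\le N-1$ and $\hcbf(\X_N)\ge0$.

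We extend the witness by one step as follows.
\begin{enumerate}
\item Since $\X_N\in\Xs$ and $\Xs$ is controlled invariant, there is $\hat\U\in\mathcal U$ with $\X_{N+1}:=f(\X_N,\hat\U)\in\Xs$, so $\hcbf(\X_{N+1})\ge0$.
\item Since $\hcbf(\X_N)\ge0$, Eq.~\eqref{eq:CBFtoDC} gives $\hdc(\X_N)\ge0$. Thus the distance constraint now holds for all $i\in\itt{N}$.
\end{enumerate}

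The extended sequence $\U_0,\dots,\U_{N-1},\hat\U$ is therefore a witness of length $N+1$ with the same first step. Hence $\X_1\in\Roura_{N+1}(\X_0)$.

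Combining the three relations shows that the NMPC-DCBF reachable set is independent of $N$ and $M$, starts equal to the MPC-MCI set at $N=1$, and that the MPC-MCI set is monotone nondecreasing in $N$.
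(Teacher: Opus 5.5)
Your proposal is correct and follows the paper's proof essentially step for step: truncation plus controlled invariance of $\Xs$ for \eqref{subeq:FE:NM-11}, the compatibility condition \eqref{eq:CBFtoDC} for \eqref{subeq:FE:11-1}, and a one-step extension by an invariance-preserving input $\hat{\U}$ (with $\hdc(\X_N)\ge0$ from \eqref{eq:CBFtoDC}) for \eqref{subeq:FE:N-N+}. The only difference is in \eqref{subeq:FE:11-1}: you read the $N=1$ distance constraint literally from Eq.~\eqref{eq:reachour} as $\hdc(\X_0)\ge0$ and make $\X_0\notin\Xa$ an explicit assumption, whereas the paper's proof writes it as $\hdc(\X_1)\ge0$ and absorbs it via \eqref{eq:CBFtoDC}; your reading is the more careful one, since for $\hdc(\X_0)<0$ the set $\Roura_1(\X_0)$ is empty while $\Rnmpc_{1,1}(\X_0)$ need not be.
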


\begin{proof}
    We begin by showing Eq.~\eqref{subeq:FE:NM-11}. Following the definition, we have:
    \begin{equation}
    \begin{split}
        \Rnmpc_{1,1} = \{ \X_{1} \in \mathcal X\ | \
        \exists \U_{0} \in \mathcal{U},\\ \X_{1} = f(\X_{0}, \U_{0}),
    \hcbf (\X_1)\ge0\}.
    \end{split}
    \end{equation}
    As $\Rnmpc_{N,M}$ has a larger number of constraints which are identical for the first state, it follows that $\Rnmpc_{N,M} \subseteq \Rnmpc_{1,1}$.

    To show equality, suppose $\X^*\in\Rnmpc_{1,1}(\X_0)$, then as $\hcbf(\X^*)\ge0$, following Definition~\ref{def:cnt-inv}, one can construct a control trajectory $\pi:\itt \infty \to \mathcal U$ of arbitrary length such that $\hcbf(\phi_f^\pi(t,\X^*)) \ge 0$. Therefore, $\X^*\in\Rnmpc_{N,M}(\X_0)$. Hence,
    \begin{equation}
        \Rnmpc_{N, M} = \Rnmpc_{1,1}. \notag
    \end{equation}
    Similarly, for $N=1$,
    \begin{equation}
    \begin{split}
            \Roura_1(\X_0) := \{ \X_{1} \in \mathcal X\ |\ \exists \U_{0} \in \mathcal{U},\\
            \X_{1} = f(\X_{0}, \U_{0}), \hdc(\X_1) \ge 0, \hcbf (\X_1)\ge0\}
    \end{split}
    \end{equation}
    As $\hcbf\ge0 \implies \hdc\ge0$, hence $\Roura_1=\Rnmpc_1$.

    Moving on to Eq.~\eqref{subeq:FE:N-N+}, suppose $\X^* \in \Roura_{N}(\X_0)$, then let $\pi:\itt{N-1} \to \mathcal{U}$ be the corresponding control trajectory and $\X_N^*$ be the corresponding terminal state. Then, as $\hcbf(\X_N^*)\ge0$, there exists $\hat\U$ such that $\hcbf(f(\X_N^*,\hat\U)) \ge 0$ and $\hdc(\X_N^*)\ge 0.$ Therefore, under the control trajectory defined as
    \begin{equation}
        \hat \pi (t) :=\begin{cases}\pi(t), & t\in \itt{N-1}\\
        \hat \U, & t = N
        \end{cases},
    \end{equation}
    the state $\X^* \in \Roura_{N+1}(\X_0)$. Hence, Eq.~\eqref{subeq:FE:N-N+} holds.
\end{proof}

\begin{rmrk}
    Following this analysis, the proposed method has a larger one-step reachable set than NMPC-DCBF. Furthermore, in simulations, we show that as the prediction horizon increases, our method enlarges the reachable set, whereas NMPC-DCBF does not.
\end{rmrk}

If the reachable set from a given state is non-empty, this indicates that the state satisfies the initial feasibility condition and ensures safety for all future time steps. Furthermore, since the one-step reachable sets expand monotonically with respect to set inclusion as the prediction horizon $N$ increases (as shown in Eq.~\eqref{subeq:FE:N-N+}), the corresponding sequence of feasible state sets also forms a non-decreasing sequence under set inclusion.

\begin{figure}[hbt!]
    \centering
    \includegraphics[width=0.80\textwidth]{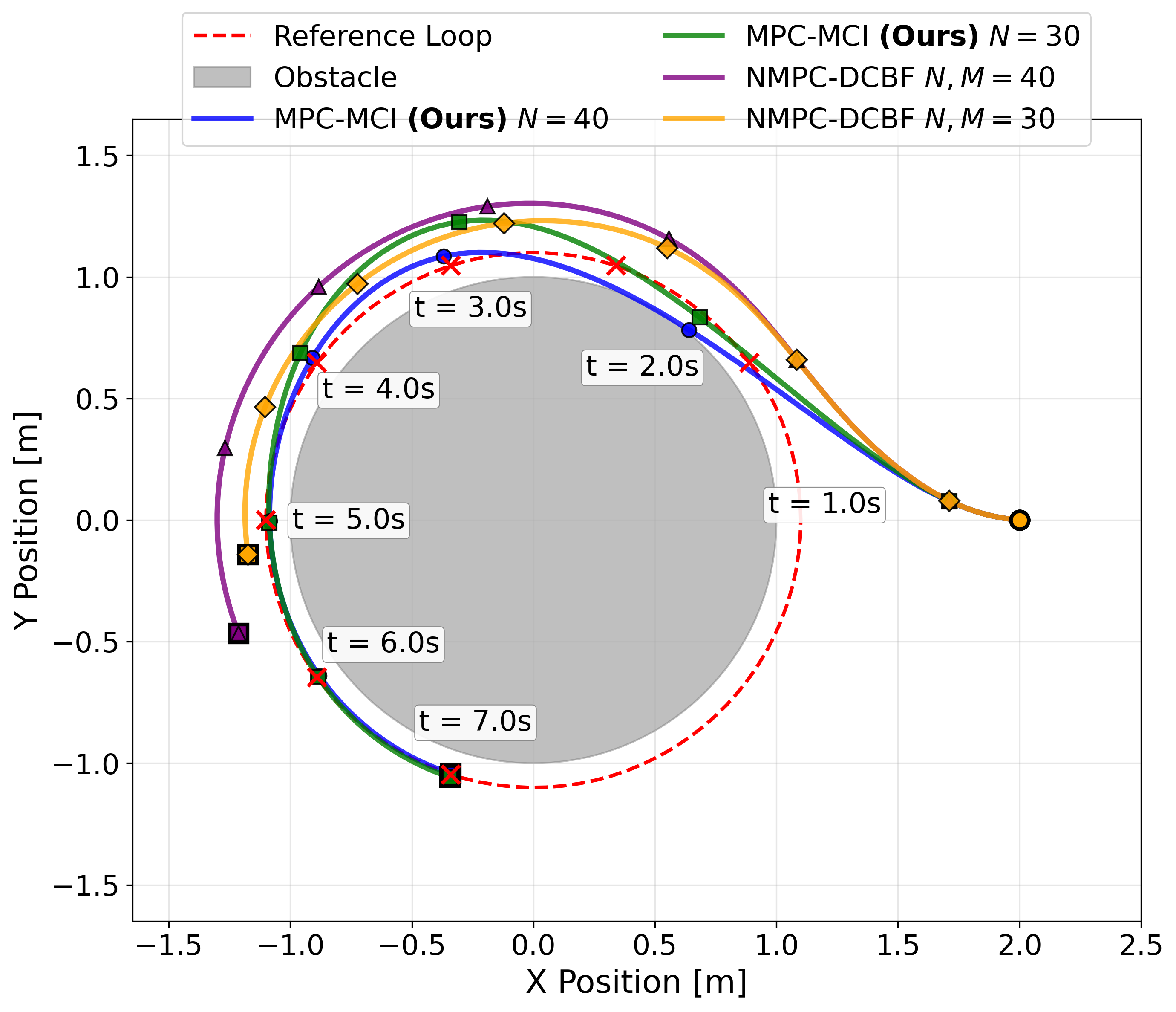}
    \caption{Trajectory tracking for the unicycle for different algorithms. The markers on the trajectories are equally separated in time. MPC-MCI is able to track a high-speed trajectory close to the obstacle, while ensuring safety.}
    \label{fig:traj}
\end{figure}

\section{Simulation Results}\label{sec:sim}
In this section, simulations are carried out to compare with other methods and demonstrate the advantages of our proposed method in terms of feasibility and reachability. Consider a discrete-time unicycle system with states $\X = [x,y,\theta,v,\omega]^\top \in \R^5$ and the acceleration control inputs $\U = [a, \alpha]^\top \in [-a_m, a_m]\times[-\alpha_m, \alpha_m]$, where $a_m>0$ and $\alpha_m>0$ are the control bounds.
The discretized dynamics are represented as:
\begin{equation}
        \X_{n+1} = f(\X_n, \U_n) = \begin{bmatrix}
        x+\cos\theta (v\Delta t+\frac 12 a\Delta t^2)\\
        y + \sin\theta (v\Delta t+\frac 12 a\Delta t^2)\\
        \theta +\omega \Delta t \\
        v+a\Delta t\\
        \omega + \alpha \Delta t
    \end{bmatrix}
\end{equation}

For numerical simulations, the sampling time is set as $\Delta t =\qty{0.05}{\second}$ and the control bounds are set as $a_m = \qty{1}{\meter\per\second\squared}, \alpha_m = \qty{1}{\radian\per\second\squared}$. All simulations are run in Python, and the optimization problem is formulated using CasADi v3.7.0~\cite{andersson2019casadi} and solved using IPOPT.

\subsection{Obstacle Configuration and Control Barrier Function}
A circular obstacle of radius $r_0=\qty{1}{\meter}$ is assumed to be present at the origin $(0,0)$. Therefore,
\begin{equation}
    \hdc(\X) = \sqrt{x^2+y^2} -r_0
\end{equation}
We propose that the following function is a control barrier function for unicycle dynamics:
\begin{equation}
    \hcbf(\X) = \sqrt{x^2+y^2} - r_0 - \frac{v^2}{2a_m}
\end{equation}
It can be shown that under the following recovery control law $\pi_s:\mathcal X \to \mathcal U$ with $\pi_s(\X) = [a_s, \alpha_s]^\top$:
\begin{equation}
    \alpha_s =0, \quad a_s = \begin{cases}
        -a_m, & v>a_m \Delta t \\
        -\frac{2v}{\Delta t},& -\frac12a_m\Delta t\le  v \le\frac12a_m \Delta t\\
        a_m, & v<-a_m\Delta t
    \end{cases}
\end{equation}
the function $\hcbf$ is a control barrier function as $\hcbf (f(\X, \pi_s(\X))) \ge \hcbf (\X)$.
The optimizer was observed to face issues with $\sqrt{\cdot}$ as constraints, hence the following functions were used, which have the same $0$-superlevel set, i.e., $d\ge0 \iff d'\ge 0$ and $h\ge0 \iff h'\ge 0$:
\begin{equation}
    \hdc'(\X) = {x^2+y^2} -r_0^2,\quad \hcbf'(\X) = {x^2+y^2} - \left(r_0 + \frac{v^2}{2a_m}\right)^2 \label{eq:sim_cbf}
\end{equation}

\subsection{Numerical Results for Feasibility Enhancements}
To demonstrate feasibility enhancement, we sample states and verify the numerical feasibility of the optimization problem in each approach. Since the state space has five dimensions, we fix $\theta, v, \omega$ to specific values and check feasibility for a grid of positions around the obstacle. The states are sampled from:
\begin{equation}
\begin{split}
    \mathcal X_{\theta_0, v_0, \omega_0} = \{(x,y,\theta,v,\omega) \in \R^5\ |\ x_{\min}\le x \le x_{\max},\\ y_{\min} \le y \le y_{\max},  v = v_0, \theta = \theta_0, \omega = \omega_0\}.
\end{split}
\end{equation}
For the simulation, we keep $x_{\min} , y_{\min} = \qty{-2.5}{\meter}, x_{\max} , y_{\max} = \qty{2.5}{\meter}$.
As demonstrated in Ref.~\cite{zeng2021enhancing}, the NMPC-DCBF framework improves feasibility compared to earlier methods. Consequently, we limit our comparative analysis to this algorithm. Additionally, a numerical example illustrating the improvement in feasibility achieved by DTCBF-MPC is presented in the appendix. As feasibility performance does not depend on the cost functions, we set the slack variables $\omega_k = 0$ in NMPC-DCBF. To present a comprehensive analysis of performance, we vary the prediction horizon $N$ and use the same control barrier function constraint on $\hcbf'$ as defined in Eq.~\eqref{eq:sim_cbf}. As shown in Theorem~\ref{thm:feas_enh}, the feasible region is independent of $\mcbf$, hence we set it to be $\mcbf=N$.

For case 1, we set $v=\qty{1}{\meter\per\second},\, \theta = \qty{0}{\radian},\, \omega = \qty{0}{\radian\per\second}$ with the plots for NMPC-DCBF in Fig.~\ref{subfig:1n} and for MPC-MCI in Fig.~\ref{subfig:1m}. The limiting teardrop shape of the infeasible region for MPC-MCI demarcates the true unsafe region, as a collision is bound to occur when speed is high, the vehicle is oriented towards, and close to the obstacle. Therefore, they are unsafe and unfeasible.

For case 2, we set $v=\qty{1.5}{\meter\per\second},\, \theta = \qty{1.57}{\radian},\, \omega = \qty{2}{\radian\per\second}$ with the plots for NMPC-DCBF in Fig.~\ref{subfig:2n} and for MPC-MCI in Fig.~\ref{subfig:2m}. Here, as the unicycle is oriented towards the $+y$-axis, we expect a teardrop oriented with the $y$-axis; however, due to a high $\omega$, the unicycle is bound to collide in a teardrop smeared to the right, as the uncontrolled trajectory will be a circular path curving to the left. The infeasible region is larger than in case 1, as the linear speed is higher.

The fraction of feasible samples for the two cases is also presented in Table~\ref{tab:feas_comp}, and the trend is clear: for a larger prediction horizon, the feasible region increases for MPC-MCI, whereas it remains unchanged for NMPC-DCBF. We also note that the feasible fraction for any prediction horizon NMPC-DCBF is strictly smaller than the feasible region for MPC-MCI with $N=2$. The number of points sampled for each case is $10{,}000$, and for case 1, the number of infeasible points reduces from $2756$ to $1582$, and for case 2, they reduce from $5560$ to $2034$. Therefore, we observe a reduction by a factor of 1.74 and 2.73.
The system enters the region $\hcbf<0$, while ensuring safety, thereby reducing the conservatism imposed by the control barrier function.
This demonstrates the importance of setting the CBF constraint only at the terminal state and having a function $d$ that denotes the known unsafe region.

\begin{table}[hbt!]
\centering
\caption{Fraction of Feasible States for different prediction horizon $N$.}
\begin{tabular}{lccc}
\hline
\multirow{2}{*}{Method} & \multirow{2}{*}{$N$} & \multicolumn{2}{c}{Feasible Fraction} \\
\cline{3-4}
                        &                    & Case 1 & Case 2 \\
\hline
\multirow{5}{*}{NMPC-DCBF}
                        & 2  & 0.7244 & 0.4440 \\
                        & 6  & 0.7244 & 0.4440 \\
                        & 11 & 0.7244 & 0.4440 \\
                        & 16 & 0.7244 & 0.4440 \\
                        & 21 & 0.7244 & 0.4440 \\
\hline
\multirow{5}{*}{\textbf{MPC-MCI (Ours)}}
                        & 2  & 0.7404 & 0.4822 \\
                        & 6  & 0.7980 & 0.6103 \\
                        & 11 & 0.8298 & 0.7245 \\
                        & 16 & 0.8386 & 0.7743 \\
                        & 21 & 0.8418 & 0.7966 \\
\hline
\end{tabular}
\label{tab:feas_comp}
\end{table}

\subsection{Numerical Results for Reachability Enhancement}
To demonstrate the benefits of a larger reachable region, we consider the problem of trajectory tracking.
The reference trajectory is defined as:
\begin{equation}
    (x_r(t), y_r(t)) = \left(r_r \cos\left(\frac{2\pi}{t_r} t\right),r_r\sin\left(\frac{2\pi}{t_r} t\right)\right)
\end{equation}
which represents a loop of radius $r_r = \qty{1.1}{\meter}$ and a loop time $t_r = \qty{10}{\second}$. To track this trajectory, $v=\qty{0.69}{\meter\per\second}$ and $\sqrt{x^2+y^2} =r_r$. Therefore, the CBF value is $h = r_r-r_0-v^2/2a_m = -0.14$. Hence, this trajectory is inside the unsafe region of the CBF.
The cost function is defined as a sum of a quadratic function of position error and a quadratic function of control input, independent of other states.

For comparison, NMPC-DCBF is run for $N=\mcbf=30,\, 40$, and the proposed approach, MPC-MCI, is run for $N=30,\,40$. It was observed that a longer prediction horizon was needed for the trajectories to converge to the reference trajectory, due to the simple cost function and the nonholonomic dynamics.

To warm-start the simulation, we use the previously predicted trajectory, appended with the recovery controller given in $\pi_s$ for the last predicted state. This leads to a significant speed-up in simulations and ensures that the optimizer starts with a feasible guess.

The results in Fig.~\ref{fig:traj} demonstrate that MPC-MCI is able to converge and track the desired trajectory. However, due to conservatism, NMPC-DCBF struggles to do so, either remaining far from the trajectory or moving too slowly to follow the reference. This demonstrates that the proposed method has a larger reachable set, allowing for tracking trajectories that are not possible with only CBF constraints, as in NMPC-DCBF.

\section{Conclusion}\label{sec:conc}
In this paper, we have shown that imposing a CBF as a single terminal constraint, coupled with a signed-distance constraint along the prediction horizon, yields a provably safe controller that admits trajectories venturing into regions of unknown safety, $\Xss \setminus \Xs$, that earlier formulations exclude by construction. Unlike prior approaches, the proposed MPC-MCI does not rely on slack variables, decay-rate tuning, two-step quasi-CBF constructions, or auxiliary constraint functions to enhance feasibility, and yet it strictly outperforms them on the same problem instances. The constructive nature of the recursive feasibility proof further enables warm-starting the nonlinear optimization with the previously predicted trajectory appended by the recovery controller, which substantially reduces computational effort while preserving the feasibility guarantee.

Theoretically, the formulation provides formal safety, recursive feasibility, and a monotone reachability enhancement with respect to the prediction horizon, all in a general nonlinear setting that is not restricted to linear or specialized dynamics. The numerical study on a nonholonomic unicycle corroborates this analysis: the infeasible region shrinks by a factor of $1.74$ to $2.73$ across the tested cases, the feasible set grows monotonically with the horizon, and the closed-loop system successfully tracks reference trajectories that lie entirely inside the unsafe region of the CBF, which NMPC-DCBF cannot follow due to its inherent conservatism. Together, these results indicate that anchoring safety at the terminal state, rather than along the entire horizon, is a principled way to recover the optimality lost to conservative CBF design without compromising on safety guarantees.

\section*{Appendix: Numerical Comparison with DTCBF-MPC and MPC-MCI}\label{appen:com}

Consider a 1D double integrator with sampling time $\Delta t=1$:
\begin{equation*}
\begin{split}
x_{t+1} = x_t + v_t + \tfrac{1}{2}u_t,
& \quad
v_{t+1} = v_t + u_t, \\
 |u_t|\le u_{\max},
&\quad |v_t| \le \tfrac{1}{2}u_{\max}.
\end{split}
\end{equation*}
The unsafe set is $\mathcal{X}_a = \{x<0\}$, so safety requires $x \ge 0, \forall t \ge 0$.
We define $\hdc(\X)=x$ and a conservative CBF $\hcbf(\X)=x-\alpha$ so that $\Xs=\{x\ge \alpha\}$. It can be shown that $\hcbf(\X)$ is a valid CBF under the state constraints for $\alpha > 0$.

\paragraph{Our approach (distance each step + terminal CBF).}
For horizon $N=2$, the constraints are:
\begin{align*}
d(\X_{1|0})\ge0 &\implies x_1 = x_0 + v_0 + \tfrac{1}{2}u_0\ge 0, \\
h(\X_{2|0})\ge0 &\implies x_2 = x_1 + v_1 + \tfrac{1}{2}u_1\ge \alpha,
\end{align*}
i.e., a distance constraint on the transient state and a terminal CBF constraint.

\paragraph{CBF at the last two steps.}
Enforcing discrete-time CBF at $k=0,1$ with decay rate $\gamma\in(0,1]$ gives
\begin{align*}
    h(\X_{1|0})\ge0 &\implies  x_1 = x_0 + v_0 + \tfrac{1}{2}u_0\ge \alpha,\\
    h(\X_{2|0})\ge(1-\gamma)h(\X_{1|0}) &\implies x_2-\alpha\ge (1-\gamma)(x_1-\alpha)\\
    & \implies v_1+\tfrac12 u_1\ge -\gamma(x_1-\alpha)
\end{align*}

\paragraph{Numerical illustration.}
Let
\begin{equation*}
x_0=0.1,\quad v_0=-0.7,\quad u_{\max}=1.5, \quad \alpha = 0.2
\end{equation*}

\begin{itemize}
    \item \textbf{Our approach:}
    At step $0$, we set $u_1 = u_{\max}$ and $u_2 = 0$, therefore, we obtain $x_1 = 0.15$ and $x_2 = 0.95$. It satisfies the constraints as $d(\X_{1|0}) = 0.15$ and $h(\X_{2|0}) = 0.75$. Thus, the problem is feasible.
    \item \textbf{CBF at last two steps:}
    At step $0$, the CBF condition requires
    \[
    \tfrac12 u_0 \ge 0.8
    \]
    which exceeds $u_{\max}=1.5$.
    Hence, the problem is infeasible \emph{at the very first step}, even though a safe trajectory exists.
\end{itemize}

Our formulation is less conservative: it admits feasible, safe trajectories for which enforcing the CBF at the last two steps renders the problem infeasible.

\bibliography{main}

\end{document}